\newtheorem{theorem}{Theorem}
\newtheorem{proposition}{Proposition}
\theoremstyle{definition}
\newtheorem{defi}{Definition}
\title{Improved distributed quantum algorithm for  Simon's problem}
\begin{document}
\begin{frontmatter}
\author{Hao Li $^{\rm a,b}$}
\author{Daowen Qiu $^{\rm a,b}$}
\address{$^{\rm a}$ Institute of Quantum Computing and Software, School of Computer Science and Engineering, \\ Sun Yat-sen University, Guangzhou 510006, China;}
\address{$^{\rm b}$The Guangdong Key Laboratory of Information Security Technology, \\Sun Yat-sen University, Guangzhou 510006, China;}
\cortext[mycorrespondingauthor]{issqdw@mail.sysu.edu.cn (D. Qiu).}

\begin{abstract}
Simon’s problem is one of the most important problems demonstrating the power of quantum computing.  Recently, an interesting  distributed quantum algorithm for  Simon's problem was proposed, where  a key sorting operator requiring a large number of qubits was employed. 
In this paper, we design an improved distributed quantum algorithm for  Simon's problem without using sorting operators, and  our algorithm has the advantage of reducing half number of qubits required for a single computing node. Moreover, our algorithm does not involve the classical search process.
\end{abstract}
\begin{keyword}
Quantum computing \sep Distributed quantum algorithms  \sep  Simon’s problem 
\end{keyword}
\end{frontmatter}

\section{Introduction}{\label{Sec1}}

Quantum computing has demonstrated significant potential in solving critical problems \cite{nielsen_quantum_2010}. However, the development of large-scale universal  quantum computers is still slow due to the limitations of the current state of the art. Therefore, in the Noisy Intermediate-Scale Quantum (NISQ) era \cite{preskill_quantum_2018}, 
it is  fascinating to design new quantum algorithms  that require fewer qubits compared with previous quantum algorithms for solving the same problems. 

Distributed quantum computing (DQC) is a novel and attractive area of research, as it aims to solve problems using multiple smaller scale quantum computers. 
DQC has been studied from various approaches and perspectives  (e.g. \cite{avron_quantum_2021}-\cite{Li2025} and the references therein), which can reduce the size and depth of circuits compared to centralized quantum computing, and reduce the depth of circuits and circuit noise to some extent. In the current NISQ era, the use of distributed quantum computing may be beneficial for the implementation of quantum algorithms.


Simon's problem is one of the most important problems in  quantum computing \cite{simon_power_1997}, which is a special kind of the hidden subgroup problem \cite{kaye_introduction_2007}. For solving Simon's problem, quantum algorithms have the advantage of exponential speedup over  classical algorithms \cite{cai_optimal_2018}. In a way, Simon's algorithm inspired the  proposal of Shor's algorithm  \cite{nielsen_quantum_2010}. 
An exact quantum algorithm for  Simon’s problem  with $O(n)$ queries was proposed  as well \cite{cai_optimal_2018}.   Then, a classical deterministic algorithm for
Simon’s problem with $O(\sqrt{2^n})$ queries is designed \cite{cai_optimal_2018}, and therefore the optimal separation of the quantum exact query complexity and the classical exact query complexity of Simon's problem is $\Theta(n)$ versus $\Theta(\sqrt{2^n})$.

Recently,  an interesting distributed quantum algorithm for Simon’s problem   was proposed \cite{Tan2022DQCSimon}. 
However, the sorting operator in the algorithm of  \cite{Tan2022DQCSimon} requires a  large number of qubits.
 Moreover, the algorithm in \cite{Tan2022DQCSimon} requires to split the hidden string $s$ to be found into two parts, $s_1$ and $s_2$, by finding $s_1$  and $s_2$ one after another.

In this paper, we design an improved distributed quantum algorithm for Simon's problem. Essentially, our algorithm replaces the sorting operator in   \cite{Tan2022DQCSimon} with a new unitary operator. In fact, our algorithm achieves the consistent  effect with Simon's algorithm. Specifically, our algorithm has the following advantages:
\begin{itemize}
\item 
Compared to the algorithm in \cite{Tan2022DQCSimon},  our algorithm requires fewer qubits.
\item 
Different from the algorithm of \cite{Tan2022DQCSimon}  for finding the hidden string $s$ by two parts, i.e. quantum algorithm and classical algorithm, our algorithm can obtain $s$ directly.
\end{itemize}

The remainder of this paper is organized as follows. In Sec. \ref{Sec2}, we  introduce  Simon's problem in  distributed scenario and recall the distributed quantum algorithm for Simon’s problem \cite{Tan2022DQCSimon}. 
Then, in Sec. \ref{Sec3}, we give an improved  distributed   quantum algorithm for  Simon's problem  and the corresponding analytical procedure.   Afterwards, in Sec. \ref{Sec4}, we compare our algorithm with Simon's algorithm and the algorithm in \cite{Tan2022DQCSimon}. Finally,  we  conclude with a summary in Sec. \ref{Sec5}.

\section{Preliminaries}\label{Sec2}

In this section, we introduce Simon's problem and  Simon's problem in  distributed scenario, and recall the distributed quantum algorithm for Simon’s in \cite{Tan2022DQCSimon}. 
Below,  we first present some notations.

For $x, y\in\mathbb{Z}_2^n$ with $x=(x_1,\ldots,x_n)$ and $y=(y_1,\ldots,y_n)$,  denote
\begin{align}
x\oplus y=((x_1+y_1)\bmod2,\ldots, (x_n+y_n)\bmod2).
\end{align}
\begin{align}
x\cdot y=(x_1\cdot y_1+\cdots +x_n\cdot y_n)\bmod2.
\end{align}

For any bit string $s\in \{0, 1\}^n$,  denote
\begin{align}
s^{\perp}=\{z \in \{0, 1\}^n|s\cdot z = 0\}.
\end{align}

In the following, we introduce Simon's problem.

\begin{mytcbprob*}{Simon's problem}
\textbf{Input:} A  function $f:\{0,1\}^n \rightarrow \{0,1\}^m$, where $m\geq n-1$.

\textbf{Promise:} There exists a hidden string $s\in\{0,1\}^n$ such that for any $x, y\in {\{0, 1\}}^n$, $f(x) = f(y)$ iff $x \oplus y \in s$.

\textbf{Output:} The hidden string $s$.
\end{mytcbprob*}

Next, we  describe Simon's problem in  distributed scenario.
\begin{mytcbprob*}{Simon's problem in  distributed scenario}
\textbf{Input:} A  function $f:\{0,1\}^n \rightarrow \{0,1\}^m$, which is divided into $2^t$ subfunctions $f_w:\{0,1\}^{n-t}\rightarrow\{0,1\}^m$ as $f_w(u)=f(uw)$, where $m\geq n-1$, $1\leq t<n$, $u \in \{0,1\}^{n-t}$ and $w\in\{0,1\}^t$.

\textbf{Promise:} There exists a hidden string $s\in\{0,1\}^n$ such that for any $x, y\in {\{0, 1\}}^n$, $f(x) = f(y)$ iff $x \oplus y \in s$.

\textbf{Output:} The hidden string $s$ by querying the $2^t$ subfunctions $f_w$.
\end{mytcbprob*}

 In the following, we  briefly review the distributed quantum algorithm for Simon’s problem \cite{Tan2022DQCSimon}.
First, we  introduce the notations and operators used in  \cite{Tan2022DQCSimon}.
 For any operator $A_w$ with $w\in \{0,1\}^t$, denote
$\prod\nolimits_{w\in\{0,1\}^t}A_w\triangleq A_{1^t}A_{1^{t-1}0}\cdots$ $A_{0^t}$, and 
$\prod'\nolimits_{w\in\{0,1\}^t}A_w\triangleq A_{0^t}A_{0^{t-1}1}\cdots A_{1^t}$.

Let $[N_t]$ represent the set of integers $\{0,1,\cdots, 2^t-1\}$. Let ${\rm BI}:\{0,1\}^t \rightarrow [N_t]$ be the function to convert a bit string of $t$ bits to an equal decimal integer.
The query operators $O^*_{f_w}$ in Algorithm \ref{algorithm1}
are defined as 
\begin{equation}
O^*_{f_w}\ket{u}\ket{b}\ket{c}=\ket{u}\ket{b}\ket{c\oplus f_w(u)},
\end{equation}
where $u\in\{0,1\}^{n-t}$, $w\in \{0,1\}^t$, $b\in\{0,1\}^{t+{\rm BI}(w)\cdot m}$ and $c\in \{0,1\}^m$. 

The active qubits for oracle $O^*_{f_w}$ are defined as control qubits $u$ and target qubits $c$, so the number of active qubits for oracle $O^*_{f_w}$ is $n-t+m$. 
 
 \begin{defi}
  For any $u \in \{0,1\}^{n-t}$, let 
  \begin{align}
  S(u)=f_{w_0}(u)f_{w_1}(u)\cdots f_{w_{2^t-1}}(u),
  \end{align}
  where $f_{w_0}(u)\preccurlyeq f_{w_1}(u)\preccurlyeq \ldots \preccurlyeq f_{w_{2^t-1}}(u)\in \{0,1\}^m$, with $w_i\in\{0,1\}^t$ $(0\leq i\leq 2^t-1)$, where $w_i\neq w_j$ for any $i\neq j$, and $\preccurlyeq$ denotes the lexicographical order.
\end{defi}
 
 The  {sorting} operator $U_{\mathrm{Sort}}:\{0,1\}^{2^{t+1}m}\rightarrow\{0,1\}^{2^{t+1}m}$ in Algorithm \ref{algorithm1}  
is defined as 
\begin{equation}\label{U_{Sort}}
\begin{split}
U_{\mathrm{Sort}}\left(\bigotimes_{w\in\{0,1\}^{t}}\ket{f_w(u)}\right)|b\rangle
=\left(\bigotimes_{w\in\{0,1\}^{t}}\ket{f_w(u)}\right)\Ket{b\oplus  S(u)},
\end{split}
\end{equation}
where   $\bigotimes\limits_{w\in\{0,1\}^{t}}\ket{f_w(u)}\triangleq \ket{f_{0^t}(u)}\ket{f_{0^{t-1}1}(u)}\cdots \ket{f_{1^t}(u)}$ and $b\in \{0,1\}^{2^tm}$.

Let $s$ be the hidden string to be found, and denote $s=s_1s_2$,
where the length of $s_1$ is $n-t$,  the length of $s_2$ is $t$. The algorithm in \cite{Tan2022DQCSimon} comprises two sub-algorithms: Algorithm \ref{algorithm1}  for finding $s_1$ and  Algorithm \ref{algorithm2}  for finding $s_2$. 
 The following is Algorithm \ref{algorithm1} and its circuit diagram is shown in Fig.  \ref{fig1}.
\begin{algorithm}[H]
\caption{Distributed quantum algorithm for  finding $s_1$}
\label{algorithm1}
\begin{algorithmic}
\State 1: $\Ket{\phi_0}=\Ket{0^{n-t}}\Ket{0^{2^{t+1}m}}$;
\State2: $\ket{\phi_1}=\left(H^{\otimes n-t}\otimes I^{\otimes {2^{t+1}m}}\right)\ket{\phi_0}$;
\State 3: $\ket{\phi_2}=\left(\prod_{w\in\{0,1\}^t}\left(O^*_{f_{w}}\otimes I^{\otimes \left(2^{t+1}-{\rm BI}(w)-1\right)m}\right)\right)\ket{\phi_1}$;
\State 4: $\ket{\phi_3}=\left(I^{\otimes {n-t}}\otimes U_{\mathrm{Sort}} \right)\ket{\phi_2}$;
\State 5: $\ket{\phi_4}=\left(\prod\nolimits_{w\in\{0,1\}^t}'\left(O^*_{f_{w}}\otimes I^{\otimes \left(2^{t+1}-{\rm BI}(w)-1\right)m}\right)\right)\ket{\phi_3}$;
\State 6: $\ket{\phi_5}=\left(H^{\otimes n-t}\otimes I^{\otimes {2^{t+1}m}}\right)\ket{\phi_4}$;
\State 7: Measure the first $n-t$ qubits of $\ket{\phi_5}$ and get an element in $s_1^{\perp}$.
\end{algorithmic}
\end{algorithm}

\begin{figure}[H]
  \includegraphics[width=\textwidth]{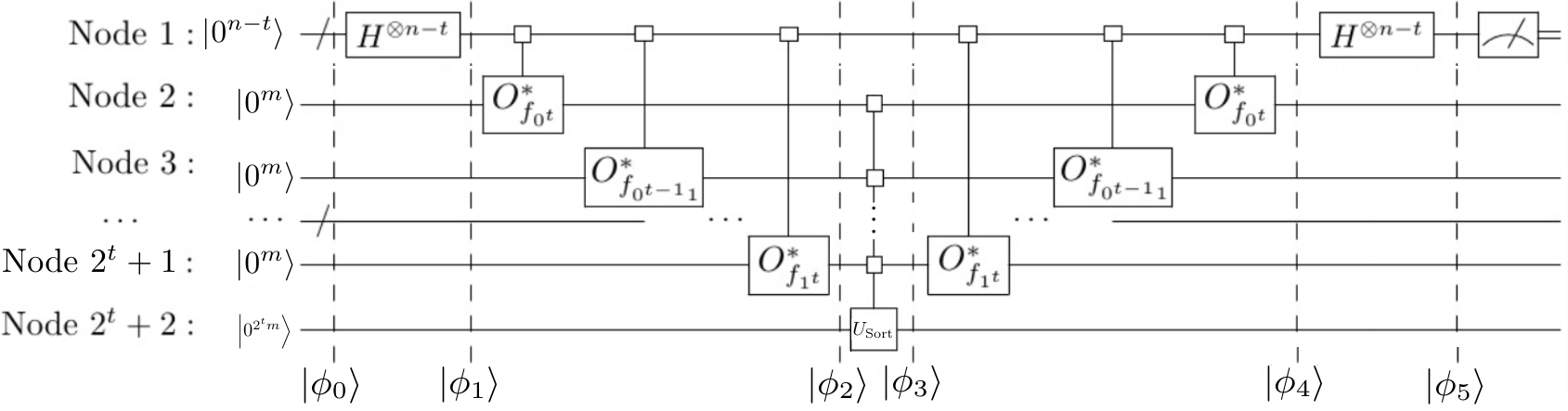}
  \caption{The circuit for  the quantum part of the distributed quantum algorithm for finding $s_1$.}
  \label{fig1}
\end{figure}

After $O(n-t)$ repetitions of Algorithm \ref{algorithm1},
it is possible to get $O(n- t)$ elements in $s_1^{\perp}$. Then, by using the classical Gaussian elimination method, $s_1$ is obtained. If $s_1$ has already been found, $s_2$ can be found in terms of  Algorithm \ref{algorithm2}.
 Finally, $s=s_1s_2$ can be obtained. The following is Algorithm \ref{algorithm2}.
 \begin{algorithm}[H]
\caption{Distributed quantum algorithm for  finding $s_2$}
\label{algorithm2}
\begin{algorithmic}
\State 1: Query each oracle $O^*_{f_w}$ once in parallel to get
$f(0^{n-t}w)$ $(w\in\{0,1\}^t)$;
\State 2: Query oracle $O^*_{f_0^t}$ once to get $f(s_10^t)$;
\State 3: Find a $v\in\{0,1\}^t$ such that  $f(0^{n-t}v) = f(s_10^t)$;
\State 4: Obtain $s_2 = v$.
\end{algorithmic}
\end{algorithm}

\section{Improved distributed quantum algorithm for  Simon's problem}\label{Sec3}

In this section, we describe an improved distributed quantum algorithm for  Simon's problem, i.e., Algorithm \ref{algorithm3}.

The main design idea of our algorithm is to simplify the algorithm  in \cite{Tan2022DQCSimon}. Specifically, we replace the sorting operator $U_{\mathrm{Sort}}$ in the algorithm in \cite{Tan2022DQCSimon} with the operator $V$ that requires fewer qubits. Essentially, the effect of our algorithm is the same as Simon's algorithm.

The operator $V:\{0,1\}^{(2^{t}+1)m+t}\rightarrow\{0,1\}^{(2^{t}+1)m+t}$  
is defined as 
\begin{equation}\label{V}
\begin{split}
V\ket{i}\left(\bigotimes_{j\in\{0,1\}^{t}}\ket{a_j}\right)|b\rangle
=\ket{i}\left(\bigotimes_{j\in\{0,1\}^{t}}\ket{a_j}\right)\Ket{b\oplus a_i},
\end{split}
\end{equation}
where  $i\in\{0,1\}^{t}$, $a_j\in \{0,1\}^m$, $b\in \{0,1\}^m$, and
$\bigotimes\nolimits_{j\in\{0,1\}^{t}}\ket{a_j}\triangleq \ket{a_{0^t}}\ket{a_{0^{t-1}1}}\cdots $ $\ket{a_{1^t}}$.

In order to easily visualize the construction of the operator $V$ in our algorithm, we give an example below.
Let  $t=1$ and $m=2$, then the circuit of  $V$ is shown in Fig. \ref{fig2}. As shown in Fig. \ref{fig2}, the circuit of  $V$ contains $X$ gates and Toffoli gates.

\begin{figure}[H]
  \centerline{\includegraphics[width=0.8\textwidth]{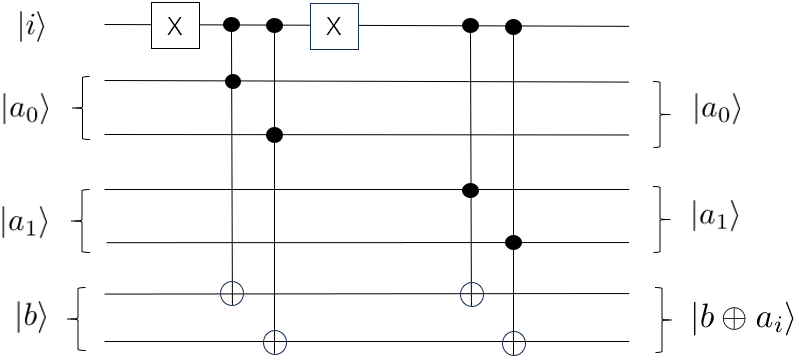}}
  \caption{The circuit for  operator $V$ with parameters $t=1$ and $m=2$.}
    \label{fig2}
\end{figure}

The following is Algorithm \ref{algorithm3} and its circuit diagram is shown in Fig.  \ref{fig3}.
\begin{algorithm}[H]
\caption{Improved distributed quantum algorithm for  Simon's problem}
\label{algorithm3}
\begin{algorithmic}
\State 1: $\Ket{\psi_0}=\Ket{0^{n-t}}\Ket{0^{t}}\Ket{0^{(2^{t}+1)m}}$;
\State2: $\ket{\psi_1}=\left(H^{\otimes n-t}\otimes H^{\otimes t}\otimes I^{\otimes {\left(2^{t}+1\right)m}}\right)\ket{\psi_0}$;
\State 3: $\ket{\psi_2}=\left(\prod_{w\in\{0,1\}^t}\left(O^*_{f_{w}}\otimes I^{\otimes \left(2^t-{\rm BI}(w)\right)m}\right)\right)\ket{\psi_1}$;
\State 4: $\ket{\psi_3}=\left(I^{\otimes {n-t}}\otimes V \right)\ket{\psi_2}$;
\State 5: $\ket{\psi_4}=\left(\prod\nolimits_{w\in\{0,1\}^t}'\left(O^*_{f_{w}}\otimes I^{\otimes \left(2^t-{\rm BI}(w)\right)m}\right)\right)\ket{\psi_3}$;
\State 6: $\ket{\psi_5}=\left(H^{\otimes n-t}\otimes H^{\otimes t}\otimes I^{\otimes {\left(2^{t}+1\right)m}}\right)\ket{\psi_4}$;
\State 7: Measure the first $n$ qubits of $\ket{\psi_5}$ and get an element in $s^{\perp}$.
\end{algorithmic}
\end{algorithm}

\begin{figure}[H]
  \includegraphics[width=\textwidth]{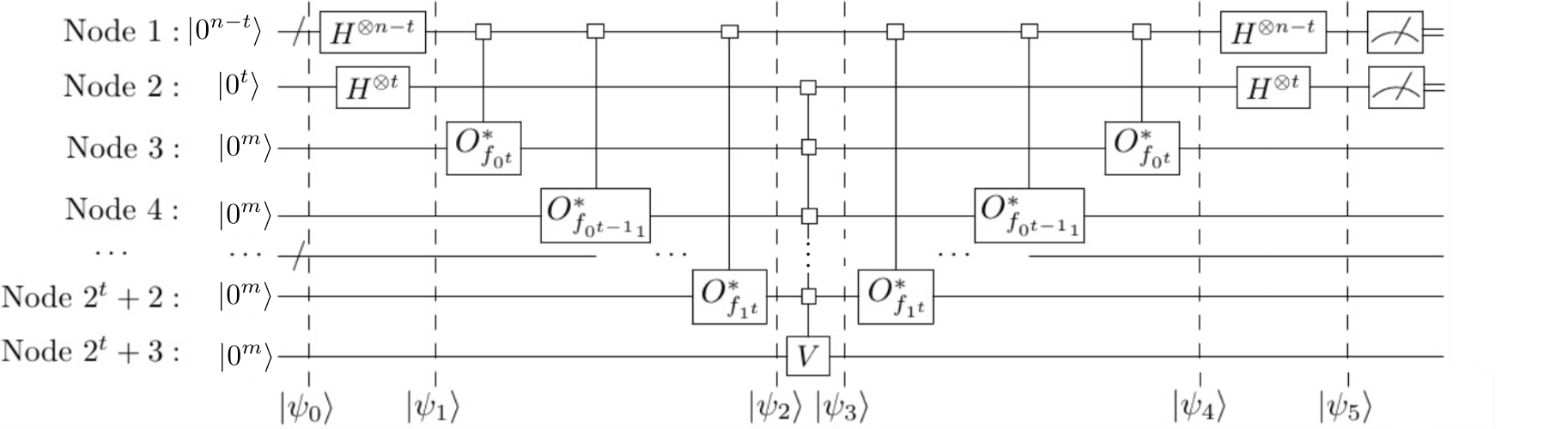}
  \caption{The circuit for  improved distributed quantum algorithm.}
  \label{fig3}
\end{figure}

In the following, we give a theorem related to proving the correctness of our algorithm, where the proof describes the evolution of quantum states in our algorithm.
\begin{theorem}\label{The1} 
In line 7 of Algorithm \ref{algorithm3}, the measuring result $z$ is obtained and satisfies $z\in s^{\perp}$  with certainty.
\end{theorem}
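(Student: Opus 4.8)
The plan is to track the evolution of the quantum state through lines 1--6 of Algorithm \ref{algorithm3} and show that, just before the final Hadamard layer, the state coincides (up to the zeroed ancillary blocks) with the canonical state produced by Simon's algorithm, namely a uniform superposition $\frac{1}{\sqrt{2^n}}\sum_{x\in\{0,1\}^n}\ket{x}\ket{f(x)}$. Once this identification is established, the conclusion $z\in s^{\perp}$ follows from the standard Simon interference argument.

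First I would compute $\ket{\psi_1}$ and $\ket{\psi_2}$. After the Hadamard layer of line 2 on the $(n-t)$-qubit register and the $t$-qubit index register, the state becomes $\frac{1}{\sqrt{2^n}}\sum_{u,i}\ket{u}\ket{i}\ket{0^{(2^{t}+1)m}}$. The layer of oracles in line 3 writes $f_w(u)$ into the block indexed by ${\rm BI}(w)$, so that $\ket{\psi_2}=\frac{1}{\sqrt{2^n}}\sum_{u,i}\ket{u}\ket{i}\big(\bigotimes_{w}\ket{f_w(u)}\big)\ket{0^m}$, where the final $m$-qubit block is the target register $b$ of $V$.

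The heart of the argument is lines 4 and 5. By the definition of $V$ in \eqref{V}, the index register $\ket{i}$ selects the block $\ket{a_i}=\ket{f_i(u)}$ and XORs its contents into $b$, giving $\ket{\psi_3}=\frac{1}{\sqrt{2^n}}\sum_{u,i}\ket{u}\ket{i}\big(\bigotimes_{w}\ket{f_w(u)}\big)\ket{f_i(u)}$. I would then observe that the reversed oracle layer in line 5 acts only on the $2^t$ intermediate blocks and not on $b$; since each such block already holds $f_w(u)$, a second application of $O^*_{f_w}$ restores it to $\ket{0^m}$. Hence $\ket{\psi_4}=\frac{1}{\sqrt{2^n}}\sum_{u,i}\ket{u}\ket{i}\ket{0^{2^{t}m}}\ket{f_i(u)}$. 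Recognizing $x=ui\in\{0,1\}^n$ with $f(ui)=f_i(u)$, and discarding the disentangled zero blocks, this is exactly $\frac{1}{\sqrt{2^n}}\sum_{x}\ket{x}\ket{f(x)}$.

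Finally, I would apply $H^{\otimes n}$ (the combined Hadamard layer of line 6) to the $x$-register and run the usual Simon computation: grouping preimages into the pairs $\{x_0,x_0\oplus s\}$ permitted by the promise, the amplitude of any outcome $\ket{z}$ carries the factor $1+(-1)^{s\cdot z}$, which vanishes unless $s\cdot z=0$. Therefore every measured $z$ lies in $s^{\perp}$ with certainty. I expect the main obstacle to be the careful bookkeeping in lines 3--5: verifying that the oracle blocks, the index register, and the routing performed by $V$ are aligned so that the correct value $f_i(u)=f(ui)$ lands in $b$ while all $2^t$ intermediate blocks are cleanly reset. Once this alignment is confirmed, the remainder reduces to the textbook Simon analysis.
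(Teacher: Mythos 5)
Your proposal is correct and follows essentially the same route as the paper's proof: tracking the state through the oracle layer, the operator $V$, and the uncomputation step to obtain $\frac{1}{\sqrt{2^n}}\sum_{x}\ket{x}\ket{0^{2^tm}}\ket{f(x)}$ with $x=uw$ and $f(uw)=f_w(u)$, then applying the standard Simon pairing argument so that the amplitude factor $1+(-1)^{s\cdot z}$ kills every $z\notin s^{\perp}$. The only cosmetic difference is that you phrase the endgame as a reduction to the textbook Simon analysis, whereas the paper writes out the Hadamard expansion and cancellation explicitly.
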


\begin{proof}
The state after the first step  of  Algorithm \ref{algorithm3} is
\begin{align}
  \ket{\psi_1}=&\left(H^{\otimes n-t}\otimes H^{\otimes t}\otimes I^{\otimes {\left(2^{t}+1\right)m}}\right)\ket{\psi_0}\\
  =&\frac{1}{\sqrt{2^{n-t}}}\sum_{u\in\{0,1\}^{n-t}}|u\rangle\frac{1}{\sqrt{2^{t}}}\sum_{w\in\{0,1\}^{t}}|w\rangle\Ket{0^{(2^{t}+1)m}}.
\end{align}

After applying $O^*_{f_{w}}$, the following state is obtained
\begin{align}
  \ket{\psi_2}=&\left(\prod_{w\in\{0,1\}^t}\left(O^*_{f_{w}}\otimes I^{\otimes \left(2^t-{\rm BI}(w)\right)m}\right)\right)\ket{\psi_1}\\
  =&\frac{1}{\sqrt{2^{n-t}}}\sum_{u\in\{0,1\}^{n-t}}\ket{u}\frac{1}{\sqrt{2^{t}}}\sum_{w\in\{0,1\}^{t}}\ket{w}
  \ket{f_{0^t}(u)}\ket{f_{0^{t-1}1}(u)}\cdots\ket{f_{1^t}(u)}\ket{0^m}.
\end{align}

Then acting with $V$, we have
\begin{align}
  |\psi_3\rangle=&\left(I^{\otimes {n-t}}\otimes V \right)\ket{\psi_2}\\=&\frac{1}{\sqrt{2^{n-t}}}\sum_{u\in\{0,1\}^{n-t}}\ket{u}\frac{1}{\sqrt{2^{t}}}\sum_{w\in\{0,1\}^{t}}\ket{w}\ket{f_{0^t}(u)}\ket{f_{0^{t-1}1}(u)}\cdots\ket{f_{1^t}(u)}\ket{f_w(u)}.
\end{align}

Further, by   $O^*_{f_{w}}$ $(w\in\{0,1\}^t)$  the following state yields
\begin{align}
\ket{\psi_4}=&\left(\prod\nolimits_{w\in\{0,1\}^t}'\left(O^*_{f_{w}}\otimes I^{\otimes \left(2^t-{\rm BI}(w)\right)m}\right)\right)\ket{\psi_3}\\
=&\frac{1}{\sqrt{2^{n-t}}}\sum_{u\in\{0,1\}^{n-t}}\ket{u}\frac{1}{\sqrt{2^{t}}}\sum_{w\in\{0,1\}^{t}}\ket{w}\Ket{0^{2^tm}}\ket{f_w(u)}.
\end{align}

Continuing to apply Hadamard transform to the first $n$ qubits of $\ket{\psi_4}$, we obtain the following state
\begin{align}
	|\psi_{5}\rangle=&\left(H^{\otimes n-t}\otimes H^{\otimes t}\otimes I^{\otimes {\left(2^{t}+1\right)m}}\right)\ket{\psi_4}\\
	=&\frac{1}{2^{n-t}}\sum_{u,u'\in\{0,1\}^{n-t}}(-1)^{u\cdot u'}\ket{u'}\frac{1}{2^{t}}\sum_{w,w'\in\{0,1\}^{t}}(-1)^{w\cdot w'}\ket{w'}\Ket{0^{2^tm}}\ket{f_w(u)}\\
    =&\frac{1}{2^{n}}\sum_{\substack{uw,u'w'\in\{0,1\}^{n}}}(-1)^{(uw)\cdot (u'w')}\ket{u'w'}\Ket{0^{2^tm}}\ket{f(uw)}\\
    =&\frac{1}{2^{n}}\sum_{\substack{x,z\in\{0,1\}^{n}}}(-1)^{x\cdot z}\ket{z}\Ket{0^{2^tm}}\ket{f(x)} (\text{denote}\ x=uw, z=u'w').
  \end{align} 

Then, we have
 \begin{align}  
    |\psi_{5}\rangle=&\frac{1}{2^{n+1}}\sum_{\substack{x,z\in\{0,1\}^{n}}}\left((-1)^{x\cdot z}\ket{z}\Ket{0^{2^tm}}\ket{f(x)}
    +(-1)^{(x\oplus s)\cdot z}\ket{z}\Ket{0^{2^tm}}\right.\ket{f(x\oplus s)}\Big)\\
    =&\frac{1}{2^{n+1}}\sum_{\substack{x,z\in\{0,1\}^{n}}}\left((-1)^{x\cdot z}\ket{z}\Ket{0^{2^tm}}\ket{f(x)}
    +\right.(-1)^{(x\oplus s)\cdot z}\ket{z}\Ket{0^{2^tm}}
    \ket{f(x)}\Big) (\text{by}\ f(x)=f(x\oplus s)) \\
    =&\frac{1}{2^{n+1}}\sum_{\substack{x,z\in\{0,1\}^{n}}}(-1)^{x\cdot z}(1+(-1)^{s\cdot z})\ket{z}\Ket{0^{2^tm}}\ket{f(x)}.
\end{align} 

If $s\cdot z =1$, then $1 + (-1)^{s\cdot z}= 0$. Consequently, the basis state $\ket{z}$ vanishes in the above state. If $s\cdot z = 0$, we have $1 + (-1)^{s\cdot z} = 2$. Thus, we have

\begin{align}
	|\psi_{5}\rangle
    =&\frac{1}{2^{n}}\sum_{\substack{z\in s^{\perp}}}\ket{z}\sum_{\substack{x\in\{0,1\}^{n}}}(-1)^{x\cdot z}\Ket{0^{2^tm}}\ket{f(x)}.
\end{align} 

Therefore, in line 7 of Algorithm \ref{algorithm3}, the measuring result $z$  satisfies $z\in s^{\perp}$  with certainty.

\end{proof}

After $O(n)$ repetitions of Algorithm \ref{algorithm3},
 $O(n)$ elements in $s^{\perp}$ are obtained. Then, using the classical Gaussian elimination method, we can obtain $s$.

\section{Comparison of algorithms}\label{Sec4}

 In this section, we compare our algorithm with Simon's algorithm  \cite{simon_power_1997} and the distributed quantum algorithm for  Simon's problem \cite{Tan2022DQCSimon}.

First, we compare our algorithm, i.e., Algorithm 3, with Simon's algorithm.
In Algorithm \ref{algorithm3}, the number of active qubits for each oracle is only $n - t + m$. However, in Simon's algorithm, the
number of active qubits for the oracle is $n + m$ \cite{simon_power_1997}. 

The quantum query complexity of Simon's algorithm is $O(n)$ \cite{simon_power_1997}. The quantum query complexity of  Algorithm ref{algorithm3} can be thought of as the minimum number of queries to query a single sub-oracle. Since Algorithm \ref{algorithm3} need run $O(n)$ times to solve  Simon's problem, we can deduce that its quantum query complexity is $O(n)$. Comparison of Simon's algorithm with Algorithm \ref{algorithm3} is presented in Tab. \ref{tab1}.

\begin{table}[H] 
\renewcommand{\arraystretch}{1.3}
	\centering
	\caption{Comparison of  Simon's algorithm  with Algorithm \ref{algorithm3}.}
	\begin{tabular}{*{4}{c}}
		\hline
		                 Algorithms   & \makecell[c]{The
number of active qubits \\ for each oracle}   &\  \makecell[c]{Quantum query  complexity}   \\
		\hline
		  \makecell[c]{Simon's algorithm \cite{simon_power_1997}}     &        $n+m$ & $O(n)$   \\
		  \makecell[c]{Algorithm \ref{algorithm3}}       &      $n+m-t$ & $O(n)$  \\	
		\hline
	\end{tabular}
  \label{tab1}
\end{table}

Then, 
we compare Algorithm \ref{algorithm1} with Algorithm \ref{algorithm3}.  Comparison of Algorithm \ref{algorithm1} with Algorithm \ref{algorithm3} is shown in Tab. \ref{tab2}. 
In the following, we give a proposition concerning the space complexity of Algorithm \ref{algorithm1} and Algorithm \ref{algorithm3}.

\begin{proposition}
The maximum number of qubits for a single computing node of Algorithm \ref{algorithm1} is $2^{t+1}m$. The maximum number of qubits for a single computing node of Algorithm \ref{algorithm3} is $(2^{t}+1)m+t$.
\end{proposition}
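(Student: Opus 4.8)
The plan is to read the claim as a qubit-counting statement. Since in a distributed protocol each operator must be applied with all of the qubits it acts on co-located at one node, the peak load of a node is the width of the widest operator it ever executes. I would therefore catalogue the operators of each algorithm together with the number of qubits each touches, and verify that the widest is the combining operator ($U_{\mathrm{Sort}}$ for Algorithm \ref{algorithm1}, $V$ for Algorithm \ref{algorithm3}), whose declared domain size then gives the stated bound.

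For Algorithm \ref{algorithm1}, the operators are $H^{\otimes(n-t)}$ (touching $n-t$ qubits), the query $O^*_{f_w}$ (with $n-t+m$ active qubits, as recalled above), and $U_{\mathrm{Sort}}:\{0,1\}^{2^{t+1}m}\to\{0,1\}^{2^{t+1}m}$. The node that applies $U_{\mathrm{Sort}}$ must hold the $2^t$ output blocks $\bigotimes_{w}\ket{f_w(u)}$ (contributing $2^t m$ qubits) together with the workspace register $\ket{b}$ of length $2^t m$, for a total of $2^{t+1}m$, matching the declared domain of $U_{\mathrm{Sort}}$. To see that this dominates, note that $2^{t+1}m\ge 4m$ since $t\ge 1$, while $n-t+m\le 2m$ since $m\ge n-1\ge n-t$; hence the maximum is $2^{t+1}m$.

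For Algorithm \ref{algorithm3} the catalogue is $H^{\otimes(n-t)}$, $H^{\otimes t}$, $O^*_{f_w}$ (again $n-t+m$ active qubits), and the replacement $V:\{0,1\}^{(2^{t}+1)m+t}\to\{0,1\}^{(2^{t}+1)m+t}$. The node applying $V$ holds the index register $\ket{i}$ of length $t$, the $2^t$ output blocks $\bigotimes_{j}\ket{a_j}$ of total length $2^t m$, and the single extra block $\ket{b}$ of length $m$, summing to $(2^t+1)m+t$. The difference $(2^t+1)m+t-(n-t+m)=2^t m+2t-n\ge 2m+2-n\ge n\ge 0$ (using $t\ge 1$ and $m\ge n-1$) shows that $V$ dominates, so the maximum is $(2^t+1)m+t$.

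The only point needing care --- and the main obstacle --- is justifying that the $(n-t)$-qubit input register is not counted at the combining node, so that the answers are the operator widths rather than the full totals $(n-t)+2^{t+1}m$ and $n+(2^t+1)m$. Here I would invoke the distributed structure: the query applications, which alone require the input register as control, are carried out on the worker nodes holding that register, while only the $m$-bit output blocks are routed to the node performing $U_{\mathrm{Sort}}$ (resp.\ $V$). Thus the combining node never simultaneously holds the input register and the full output array, and its qubit count equals the width of the combining operator, as claimed.
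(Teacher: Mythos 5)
Your proposal is correct and follows essentially the same route as the paper: both identify the combining operator ($U_{\mathrm{Sort}}$, resp.\ $V$) as the widest operator a single node must execute, count its qubits as the transmitted output blocks plus the local workspace (plus the index register for $V$), and check that this dominates the $n-t+m$ qubits of a query node using $m \geq n-1$. Your write-up is in fact slightly more careful than the paper's, since you spell out the dominance inequalities (e.g.\ $(2^t+1)m+t-(n-t+m)\geq 0$) that the paper only asserts, and you explicitly justify why the $(n-t)$-qubit input register is never co-located with the combining node.
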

\begin{proof}
In Algorithm \ref{algorithm1}, implementing the operator $U_{\mathrm{Sort}}$ requires uniting Node 2, Node 3, and up
to Node $2^t + 2$. The $m$ qubits from Node 2 are transmitted to Node $2^t+2$. The $m$ qubits  from Node 3 are transmitted to Node $2^t+2$, and so on. Lastly,  the $m$ qubits from Node $2^t+1$      are transmitted to Node $2^t+2$. Thus, the number of qubits required for Node $2^t+2$  implementing the operator $U_{\mathrm{Sort}}$ is $2^{t+1}m$. 
The number of qubits required for the node implementing the query operator $O^*_{f_w}$ is $n - t + m$. Since $m\geq n-1$, it follows that $\max\left\{n - t + m,2^{t+1}m\right\}=2^{t+1}m$. Thus, the maximum number of
qubits  for a single computing node in Algorithm \ref{algorithm1} is $2^{t+1}m$. 

In Algorithm \ref{algorithm3}, implementing the operator $V$ requires uniting Node 2, Node 3, and up
to Node $2^t + 3$. The $t$ qubits from Node 2 are transmitted to Node $2^t+3$. The $m$ qubits from Node 3 are transmitted to Node $2^t+3$. The 
 $m$ qubits from Node 4 are transmitted to Node $2^t+3$, and so on. Lastly,  the $m$ qubits from Node $2^t+2$  are transmitted to Node $2^t+3$. Thus,  the number of qubits required for Node $2^t+3$  implementing the operator $V$ is $(2^{t}+1)m+t$. 
Therefore, it can be deduced that the maximum number of qubits  for a single computing node in Algorithm \ref{algorithm3} is $(2^{t}+1)m+t$. Since both $m$ and $t$ are positive integers,  the maximum number of qubits for a single computing node in Algorithm \ref{algorithm3} is less than that of Algorithm \ref{algorithm1}. 
\end{proof}

In the following, we give a proposition describing the quantum communication complexity of  Algorithm \ref{algorithm1} and Algorithm  \ref{algorithm3}. 

\begin{proposition}
The quantum communication complexity of Algorithm \ref{algorithm1} is $O((n-t)(2^t(n+m-t)))$. The quantum communication complexity of Algorithm \ref{algorithm3} is $O(n(2^t(n+m-t)+t))$. 
\end{proposition}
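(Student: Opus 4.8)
The plan is to count the total number of qubits transmitted between computing nodes across all repetitions of each algorithm, under the same node assignment used in the proof of the preceding proposition (on the per-node qubit count). I would first fix the communication cost of a single run of the quantum part, and then multiply by the number of repetitions that each algorithm requires.

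First I would go through a single run step by step. The Hadamard layers act locally on the input register $u$ and the index register, so they contribute nothing. Communication arises in exactly two places: the two oracle rounds (steps 3 and 5) and the central operator (step 4). In the oracle rounds, the $(n-t)$-qubit control register $u$, held at the input node, must be made available at each of the $2^t$ oracle nodes and then restored, while the $m$-qubit output registers stay local to their oracle nodes; under any reasonable point-to-point model this control distribution costs $O\bigl(2^t(n-t)\bigr)$ qubits per run. This count is identical for Algorithm \ref{algorithm1} and Algorithm \ref{algorithm3}, since both use the same query operator $O^*_{f_w}$ with the same active qubits $u$ and $c$.

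Next I would charge the output movement to the central operator. For $U_{\mathrm{Sort}}$ in Algorithm \ref{algorithm1}, the proof of the preceding proposition already shows that its implementation gathers the $2^t$ output registers, i.e. $2^t m$ qubits, at the sorting node; these must be returned for the step-5 uncomputation, so this contributes $O\bigl(2^t m\bigr)$. Adding the oracle rounds, the per-run communication of Algorithm \ref{algorithm1} is $O\bigl(2^t(n-t)+2^t m\bigr)=O\bigl(2^t(n+m-t)\bigr)$. For $V$ in Algorithm \ref{algorithm3}, the same gathering additionally transmits the $t$-qubit index register consumed by $V$, giving $O\bigl(2^t m+t\bigr)$; hence the per-run communication of Algorithm \ref{algorithm3} is $O\bigl(2^t(n+m-t)+t\bigr)$, where the extra additive $t$ is exactly the index register that $V$ uses but $U_{\mathrm{Sort}}$ does not.

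Finally I would multiply by the number of repetitions. Algorithm \ref{algorithm1} recovers only the $(n-t)$-bit part $s_1$ and is repeated $O(n-t)$ times, yielding $O\bigl((n-t)\,2^t(n+m-t)\bigr)$; Algorithm \ref{algorithm3} recovers the full $n$-bit string $s$ and is repeated $O(n)$ times, yielding $O\bigl(n(2^t(n+m-t)+t)\bigr)$, as claimed. The main obstacle I expect is the precise justification of the two per-run bounds $O\bigl(2^t(n-t)\bigr)$ and $O\bigl(2^t m\bigr)$: one must commit to a concrete quantum communication model, specifying, for a register held at one node and used to control operations at several others, whether the control is shared by sequential transmission, teleportation, or cat-state entanglement, and whether each move is counted once or as a round trip. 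Once that model is fixed, one checks that distributing the control over $2^t$ oracle nodes is $O\bigl(2^t(n-t)\bigr)$ and that gathering and returning the $2^t$ output registers is $O\bigl(2^t m\bigr)$; the remaining additions and the multiplication by the repetition count are then routine and absorbed into the big-$O$.
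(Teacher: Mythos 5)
Your proposal is correct and follows essentially the same route as the paper: it decomposes the per-run cost into $O\bigl(2^t(n-t)\bigr)$ for circulating the control register among the $2^t$ oracle nodes plus $O\bigl(2^t m\bigr)$ (respectively $O\bigl(2^t m + t\bigr)$, the extra $t$ being the index register fed to $V$) for gathering the output registers at the central operator, and then multiplies by the $O(n-t)$ versus $O(n)$ repetition counts. The paper simply fixes one concrete transmission scheme (sequential hand-off of the control qubits from oracle to oracle, then point-to-point gathering at $U_{\mathrm{Sort}}$ or $V$), which resolves the model ambiguity you flag, and arrives at the same bounds.
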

\begin{proof}
In Algorithm \ref{algorithm1},  after   $\ket{\phi_1}$, the $n-t$ qubits of  Node 1  are transmitted to   $O^*_{f_{0^t}}$. The $n-t$ control qubits of  $O^*_{f_{0^t}}$  are transmitted to $O^*_{f_{0^{t-1}1}}$, and so on. Lastly, the $n-t$ control qubits of  $O^*_{f_{1^{t-1}0}}$  are transmitted to  $O^*_{f_{1^ t}}$. The quantum communication complexity of the process is $O\left(2^t(n-t)\right)$. 
After $\ket{\phi_2}$, the $m$ target qubits of   $O^*_{f_{0^{t}}}$  are transmitted to   $U_{\mathrm{Sort}}$. The $m$ target qubits of   $O^*_{f_{0^{t-1}1}}$  are transmitted to   $U_{\mathrm{Sort}}$,  and so on. Eventually,  the $m$ target qubits  of  $O^*_{f_{1^ t}}$  are transmitted to  $U_{\mathrm{Sort}}$.  The quantum communication complexity of the process is $O \left(2^tm\right)$. 
Finally, the quantum communication complexity of Algorithm \ref{algorithm1} is $O(2^t(n+m-t))$ for one iteration run. Since Algorithm \ref{algorithm1} needs to be run $O(n-t)$ times to find  $s_1$, the quantum communication complexity of Algorithm \ref{algorithm1} is $O((n-t)(2^t(n+m-t)))$. 

In Algorithm \ref{algorithm3}, after  $\ket{\psi_1}$, the $n-t$ qubits of  Node 1  are transmitted to   $O^*_{f_{0^t}}$. The $n-t$ control qubits of  $O^*_{f_{0^t}}$  are transmitted to $O^*_{f_{0^{t-1}1}}$, and so on. Lastly, the $n-t$ control qubits of  $O^*_{f_{1^{t-1}0}}$  are transmitted to  $O^*_{f_{1^ t}}$.  The quantum communication complexity of the process is $O\left(2^t(n-t)\right)$. 
After  $\ket{\psi_2}$, the $t$ qubits of   Node 2  are transmitted to   $V$. The $m$ target qubits of   $O^*_{f_{0^{t}}}$  are transmitted to   $V$. The $m$ target qubits of   $O^*_{f_{0^{t-1}1}}$  are transmitted to  $V$, and so on. Eventually,  the $m$ target qubits  of  $O^*_{f_{1^ t}}$  are transmitted to  $V$.  The quantum communication complexity of the process is $O \left(2^tm+t\right)$. 
Finally, the quantum communication complexity of Algorithm \ref{algorithm3} is $O(2^t(n+m-t)+t)$ for one iteration run. Since Algorithm \ref{algorithm3} needs to be run $O(n)$ times to find  $s$, the quantum communication complexity of Algorithm \ref{algorithm3} is $O(n(2^t(n+m-t)+t))$. 
\end{proof}

\begin{table}[H] 
\renewcommand{\arraystretch}{1.3}
	\centering
	\caption{Comparison of  Algorithm \ref{algorithm1} with Algorithm \ref{algorithm3}.}
	\begin{tabular}{*{4}{c}}
		\hline
		                 Algorithms      &  \makecell[c]{Maximum number of
qubits \\ for a
single computing node} &\  \makecell[c]{Quantum communication complexity} \\
		\hline
		  \makecell[c]{Algorithm \ref{algorithm1} \cite{Tan2022DQCSimon}}     &        $2^{t+1}m$ & $O((n-t)(2^t(n+m-t)))$\\
		  \makecell[c]{Algorithm \ref{algorithm3}}       &      $(2^t+1)m+t$ & $O(n(2^t(n+m-t)+t))$  \\	
		\hline
	\end{tabular}
  \label{tab2}
\end{table}

\section{Conclusion}\label{Sec5}

Inspired by the ideas and methods of  algorithms  in  \cite{Qiu24,Tan2022DQCSimon} and Simon's algorithm \cite{simon_power_1997},
 we have proposed an improved distributed  quantum  algorithm for Simon's problem.
Specifically,  we  have designed a distributed query  operator.  
First, we have decomposed  the function in Simon's problem into a number of smaller subfunctions. Then, we have united the query operators of the several subfunctions by using a specific unitary operator.


Compared to the  distributed quantum algorithm for Simon’s problem  \cite{Tan2022DQCSimon}, our algorithm  has shown certain advantages. 
The maximum number of qubits for a single computing node in our algorithm is almost half of that of the algorithm in \cite{Tan2022DQCSimon}, which  may accelerate its implementation in the NISQ era.  
In  future, we may design an improved distributed exact quantum algorithm for generalized Simon's problem, by utilizing the technical methods proposed in \cite{Qiu24,Tan2022DQCSimon}  and this paper, as well as the quantum amplitude amplification algorithm.

 




 


\begin{thebibliography}{10}
\expandafter\ifx\csname url\endcsname\relax
  \def\url#1{\texttt{#1}}\fi
\expandafter\ifx\csname urlprefix\endcsname\relax\def\urlprefix{URL }\fi
\expandafter\ifx\csname href\endcsname\relax
  \def\href#1#2{#2} \def\path#1{#1}\fi

\bibitem{nielsen_quantum_2010}
M.~Nielsen, I.~Chuang, Quantum Computation and Quantum Information, Cambridge
  University Press, Cambridge, 2010.

\bibitem{preskill_quantum_2018}
J.~Preskill, Quantum computing in the {NISQ} era and beyond, Quantum 2 (2018)
  79.

\bibitem{avron_quantum_2021}
J.~Avron, O.~Casper, I.~Rozen, Quantum advantage and noise reduction in
  distributed quantum computing, Phys. Rev. A 104 (2021) 052404.

\bibitem{beals_efficient_2013}
R.~Beals, S.~Brierley, O.~Gray, A.~Harrow, S.~Kutin, N.~Linden, D.~Shepherd,
  M.~Stather, Efficient distributed quantum computing, Proc. R. Soc. A
  469~(2153) (2013) 20120686.




\bibitem{Qiu2017DQC}
K.~Li, D.~Qiu, L.~Li, S.~Zheng, Z.~Rong, Application of distributed
  semi-quantum computing model in phase estimation, Inf. Process. Lett. 120
  (2017) 23--29.

\bibitem{Qiu24}
D.~Qiu, L.~Luo, L.~Xiao, Distributed {G}rover's algorithm, Theor. Comput. Sci.
  993 (2024) 114461, arXiv:2204.10487.

\bibitem{Tan2022DQCSimon}
J.~Tan, L.~Xiao, D.~Qiu, L.~Luo, P.~Mateus, Distributed quantum algorithm for
  {S}imon's problem, Phys. Rev. A 106 (2022) 032417, arXiv:2204.11388.

\bibitem{Xiao2023DQAShor}
L.~Xiao, D.~Qiu, L.~Luo, P.~Mateus, Distributed {S}hor's algorithm, Quantum
  Inf. Comput. 23~(1\&2) (2023) 27--44, arXiv:2207.05976.

\bibitem{Xiao2023DQAkShor}
L.~Xiao, D.~Qiu, L.~Luo, P.~Mateus, Distributed phase estimation algorithm and
  distributed {S}hor's algorithm, arXiv:2304.12100, 2023.
  
  \bibitem{Hao2023DDJ}
H.~Li, D.~Qiu, L.~Luo, Distributed exact quantum algorithms for
  {D}eutsch--{J}ozsa problem, arXiv:2303.10663 (2023).

\bibitem{Hao2023DGSP}
H.~Li, D.~Qiu, L.~Luo, Exact distributed quantum algorithm for generalized
  {S}imon's problem, Acta Inform. 61 (2024) 131--159, arXiv:2307.14315.
  
 
\bibitem{Hao2024DMA}
H.~Li, D.~Qiu, L.~Luo, Distributed exact multi-objective quantum search
  algorithm, arXiv:2409.04039, 2024.
  
\bibitem{Li2025}
H.~Li, D.~Qiu, and L.~Luo, Distributed generalized Deutsch-Jozsa algorithm,
in: Proc. 31st Int. Conf. Comput. Combin. (COCOON), Springer, 2025, pp. 214--225.

\bibitem{simon_power_1997}
D.~Simon, On the power of quantum computation, SIAM J. Comput. 26~(5) (1997)
  1474--1483.

\bibitem{kaye_introduction_2007}
P.~Kaye, R.~Laflamme, M.~Mosca, An Introduction to Quantum Computing, Oxford
  University Press, Oxford, 2006.

\bibitem{cai_optimal_2018}
G.~Cai, D.~Qiu, Optimal separation in exact query complexities for Simon's
  problem, J. Comput. Syst. Sci. 97 (2018) 83--93.

\end{thebibliography}

\end{document}